  \setlist[enumerate,1]{leftmargin=15pt}
  \setlist[itemize,1]{leftmargin=10pt}
  \setlist[description,1]{leftmargin=15pt}
\newenvironment{Red}{\noindent\color{red}}{}
\newtheorem{theorem}{Theorem}[section]
\newtheorem{corollary}[theorem]{Corollary}
\newtheorem{lemma}[theorem]{Lemma}
\newtheorem{proposition}[theorem]{Proposition}
\theoremstyle{definition}
\newtheorem{convention}[theorem]{Convention}
\newtheorem{definition}[theorem]{Definition}
\newtheorem{setting}{Setting}
\newenvironment{lquote}% Here l alludes to ``longer".
  {\list{}{\leftmargin=1.5em\rightmargin=1em}\item[]}%
  {\endlist}
\newcommand\A{\ensuremath{\mathcal A}}
\renewcommand\a{\ensuremath{|a\rangle\langle a|}}
\newcommand\ang[1]{\ensuremath{\big\langle#1\big\rangle}}
\renewcommand\b{\ensuremath{\beta}}
\newcommand\bra[1]{\ensuremath{\langle#1|}}
\newcommand\braket[2]{\ensuremath{\langle#1\,|\,#2\rangle}}
\newcommand\C{\ensuremath{\mathcal C}}
\newcommand\CS[1]{\ensuremath{\mathrm{CS}(#1)}}
\newcommand\DO[1]{\ensuremath{\mathrm{DO(#1)}}}
\newcommand{\dg}{^\dag}
\newcommand\End[1]{\ensuremath{\mathrm{End}(#1)}}
\newcommand\G{\ensuremath{\mathcal G}}
\renewcommand{\H}{\ensuremath{\mathcal H}}
\newcommand\Id{\ensuremath{\mathbb 1}}
\newcommand\iset[1]{\ensuremath{\left\langle#1\right\rangle}}
\newcommand{\K}{\ensuremath{\mathcal K}}
\newcommand{\ket}[1]{\ensuremath{|#1\rangle}}
\newcommand{\ketbra}[2]{\ensuremath{|#1\rangle\langle#2|}}
\renewcommand\L{\ensuremath{\mathcal L}}
\newcommand\norm[1]{\ensuremath{\Vert#1\Vert}}
\newcommand\Out{\ensuremath{\mathrm{Output}}}
\newcommand\ox{\ensuremath{\otimes}}
\newcommand\pbrz{\ensuremath{P_\T(\b\,| \rz)\,}}
\newcommand\pbs{\ensuremath{P(\b | \sigma)}}
\renewcommand\phi{\varphi}
\newcommand\qef{\hfill$\triangleleft$} %Quod erat faciendum
\renewcommand\r{\ensuremath{\rho}}
\newcommand\rz{\ensuremath{\,\mathring\rho\,}}
\newcommand\s{\ensuremath{\Sigma}}
\newcommand\T{\ensuremath{\mathcal T}}
\newcommand{\Tr}{\ensuremath{\mathrm{Tr}}}
\title{Input independence}
\author{Yuri Gurevich and Andreas Blass\\
\normalsize University of Michigan, Ann Arbor, Michigan, USA}
\date{}
\begin{document}
\maketitle

\footnotetext{The authors are partially supported by the US Army Research Office under W911NF-20-1-0297.}
\thispagestyle{empty}

\begin{abstract}
We establish the following \emph{input independence} principle.
If a quantum circuit \C\ computes a unitary transformation $U_\mu$ along a computation path $\mu$, then the probability that computation of \C\ follows path $\mu$ is independent of the input.
\end{abstract}
\maketitle
\thispagestyle{empty}

\section{Introduction} %1
\label{s:intro}

In the analysis of quantum computations, estimating the probability that a given circuit computes the desired output is rather important.
Typically, the circuit is good enough if that probability is high enough.
Static analysis of the circuit may reveal good computation paths, those leading to the desired outcome.
But estimating the probability $P$ that the computation follows a good path may not be easy.
A priori $P$ depends on the input.

The main result of this paper is the \emph{input independence} principle: If a quantum circuit \C\ computes a unitary transformation $U_\mu$ along a computation path $\mu$, then the probability that computation of\, \C\ follows path $\mu$ is independent of the input.
The principle implies that, if the desired output is given by a unitary transformation $U$, then the probability $P$ that computation of \C\ follows a path leading to the desired outcome does not depend on the input.
Furthermore, if the unitary transformation is allowed to depend on the path, then $P$ still does not depend on the input.

First, based on the syntax and semantics of quantum circuits in \cite{G250},  we make the principle mathematically precise.
In the process, we recall terminology from \cite{G250} and repeat some, though not all, of the definitions.

\begin{setting}\label{s1}\mbox{}
\begin{enumerate}[A.]
\item Following the standard textbook \cite{NC} of Michael Nielsen and Isaac Chuang, we define a \emph{(general) measurement} over a Hilbert space \K\ as an indexed family $M = \ang{L_i: i\in I}$ of (bounded linear)  operators on \K\ where $\sum_{i\in I} L_i^\dag L_i$ is the identity operator $\Id$ on \K; elements of the index set $I$ are \emph{(classical) outcomes} of $M$ \cite[\S2.2.3]{NC}.
    However, contrary to \cite{NC}, we do not presume that \K\ is finite dimensional or that $I$ is finite.
    Instead, as in \cite{BLM}, we presume that \K\ is separable (and complex), $I$ is countable, and the sum converges in the weak operator topology%
    \footnote{We began this work in the context of quantum computing and required that Hilbert spaces are  finite dimensional and measurements have only finitely many outcomes.
    However, the results work just as well in the broader setting adopted here.}.
    A \emph{state} in a Hilbert space \K\ is given by a density operator on \K; the set of these density operators is denoted DO(\K).

\item In much of the literature, the meaning of ``quantum circuit'' is ambiguous.
    A circuit typically contains unitary operations, often contains measurements, and may contain classical channels.
   The ambiguity is whether the circuit also tells which gates fire simultaneously and, for those that don't, the order in which they fire.
   In the present paper, a quantum circuit may contain unitary operations, measurements, and classical channels. And by default a circuit comes together with a fixed execution schedule $(B_1, B_2, \dots, B_N)$ where $B_1$ is the set of gates to be fired first, $B_2$ is the set of gates to be fired next, and so on%
   \footnote{In \cite{G250}, we distinguished between circuits with execution schedules and without.
   A circuit with a fixed execution schedule was called a \emph{circuit algorithm} there.}.
   It is required of course that the sets $B_n$ are disjoint, that their union contains all the gates, and that, for every $n$ and every gate $G\in B_n$, the set $\bigcup_{k<n}B_k$  of gates scheduled to fire before $G$ contains all (quantum and classical) prerequisites of $G$.

\item For uniformity, we treat unitary operators as measurements with a single outcome, so that all gates are measurement gates.
    Every gate $G$ in a quantum circuit \C\ is assigned a finite nonempty set of measurements, called $G$ \emph{measurements}.
    It is assumed without loss of generality that different $G$ measurements have disjoint outcome sets.
    In addition, $G$ is assigned a \emph{selection} function $\Sigma_G$ that, in runtime, picks the $G$ measurement to be executed.
    Let \CS{G} be the set of \emph{classical sources} of $G$, i.e.\ the gates with classical channels to $G$.
    Given outcomes $o_F$ of the measurements performed by the classical sources $F$ of $G$, the $G$ measurement $\Sigma_G \iset{o_F: F\in \CS{G}}$ is picked.
    Without loss of generality, we presume that there are no superfluous $G$ measurements: every $G$ measurement is in the range of $\s_G$.
    In particular, if $G$ has no classical sources, then it is assigned (a set comprising) a  single $G$ measurement.

\item A \emph{(computation) path}\footnote{Paths were called tracks in \cite{G250}} $\mu$ through a quantum circuit \C\ is an assignment to each gate $G$ of an outcome $\mu(G)$ of the $G$-measurement $M_\mu(G) = \Sigma_G\iset{\mu(F): F\in \CS{G}}$.
    Note that this last equation is a coherence requirement on the measurements performed along the path.
    Given an input for \C, a path $\mu$ represents a potential computation of \C\ which is said to \emph{follow} path $\mu$ on the given input.

\item Consider quantum circuits \C\ with principal inputs and outputs (being states) in a Hilbert state \H\ of dimension $\ge2$.
    \C\ may also use ancillas (initialized in a fixed pure state) and may produce garbage to be discarded at the end of the computation.
    Let \A\ be a Hilbert space hosting the ancillas (which we may view as a single higher-dimensional ancilla) at the beginning of computation and the garbage at the end.
    Accordingly, a full input of \C\ has the form $\rz = \r\ox\a$ where $\r\in \DO{\H}$ and \ket a is a fixed unit vector in \A.
    If the computation of \C\ on input \rz\ follows $\mu$ then the output is a density operator $\Out_\C(\mu\, | \rz)$ on \H\ox\A, and the principal output is the density operator on \H\ given by the partial trace $\Tr_\A\big(\Out_\C(\mu, | \rz)\big)$. \qef
\end{enumerate}
\end{setting}

\begin{theorem}[Input independence]\label{t:ii}
Suppose that a quantum circuit \C\ computes a unitary operator $U_\mu: \H\to\H$ along a computation path $\mu$ in the sense that density operators $\Tr_\A\big( \Out_\C (\mu\, |\rz) \big)$ and $U_\mu\rho\, U_\mu\dg$ represent the same state in \H\ for all $\r\in \DO{\H}$.
Then the probability $P_\C(\mu\, |\rz)$ that the computation of\, \C\ follows path $\mu$ on principal input \r\ does not depend on \r.
\end{theorem}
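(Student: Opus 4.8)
The plan is to isolate a normal form for ``following path $\mu$'' and then finish with a short linearity argument. First I would record what following $\mu$ means operationally: along $\mu$ every classical outcome is prescribed, so each selection function $\s_G$ is evaluated on determined data and every gate $G$ contributes a single fixed operator; composing these in the order dictated by the execution schedule $(B_1,\dots,B_N)$ (tensoring within a block, padding by the identity on idle wires) yields one bounded operator $E_\mu$ on $\H\ox\A$ that depends only on $\C$ and $\mu$. With $E_\mu$ in hand, the unnormalized state produced along $\mu$ on full input $\rz$ is $E_\mu\,\rz\,E_\mu\dg$; the probability of following the path is the telescoping product of the conditional Born probabilities, $P_\C(\mu\,|\rz)=\Tr\big(E_\mu\,\rz\,E_\mu\dg\big)$; and the output is $\Out_\C(\mu\,|\rz)=E_\mu\,\rz\,E_\mu\dg\big/P_\C(\mu\,|\rz)$. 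This step is bookkeeping on top of the semantics recalled from \cite{G250}, but I expect it to be the bulk of the work --- in particular one must check that classical channels and the block structure really do collapse, along a fixed path, into the single operator $E_\mu$.

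Granting that, set $\Phi(\r)=\Tr_\A\big(E_\mu(\r\ox\a)E_\mu\dg\big)$ and $p(\r)=\Tr\big(\Phi(\r)\big)=P_\C(\mu\,|\,\r\ox\a)$. Since $\r\mapsto\r\ox\a$, conjugation by $E_\mu$, and the partial trace over \A\ are all linear, $\Phi$ is a linear map and $p$ a linear functional on the trace-class operators on \H; on $\DO{\H}$ the functional $p$ is nonnegative and, because $\Out_\C(\mu\,|\,\r\ox\a)$ is assumed to be a genuine density operator, strictly positive. The hypothesis of the theorem is exactly the identity
\[
  \Phi(\r)\;=\;p(\r)\,U_\mu\,\r\,U_\mu\dg\qquad\text{for all }\r\in\DO{\H}.
\]

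The core step is then to deduce that $p$ is constant on $\DO{\H}$. Fix $\r_1\neq\r_2$ in $\DO{\H}$ and use the identity at $\r_1$, at $\r_2$, and at $\r=\tfrac12(\r_1+\r_2)$. Expanding $\Phi(\r)$ by linearity in two ways --- as $\tfrac12\Phi(\r_1)+\tfrac12\Phi(\r_2)$, and as $p(\r)\,U_\mu\,\r\,U_\mu\dg$ with $p(\r)=\tfrac12\big(p(\r_1)+p(\r_2)\big)$ --- and then applying the injective map $X\mapsto U_\mu\dg X U_\mu$ to both sides, one obtains
\[
  2\,p(\r_1)\,\r_1+2\,p(\r_2)\,\r_2\;=\;\big(p(\r_1)+p(\r_2)\big)\big(\r_1+\r_2\big),
\]
which collapses to $\big(p(\r_1)-p(\r_2)\big)(\r_1-\r_2)=0$. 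Since $\r_1-\r_2\neq0$, this forces $p(\r_1)=p(\r_2)$, so $p$ is constant on $\DO{\H}$; that is, $P_\C(\mu\,|\,\r\ox\a)$ is independent of \r, which is the assertion. The only use of unitarity is the injectivity of $X\mapsto U_\mu\dg X U_\mu$, which is precisely what lets us strip off $U_\mu$ and read the hypothesis as a relation among the inputs; everything else rests on the linearity of the unnormalized path dynamics in the input.
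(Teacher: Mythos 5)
Your proof is correct, but the key step is genuinely different from the paper's. The first half of your argument --- collapsing a fixed path into a single operator $E_\mu$ with $\Out_\C(\mu\,|\rz)\propto E_\mu\rz E_\mu\dg$ and $P_\C(\mu\,|\rz)=\Tr(E_\mu\rz E_\mu\dg)/\Tr(\rz)$ --- is exactly what the paper does via its reduction theorem (circuits to measurement trees) and the cumulative operators $C_\b$; you are right that this is the bulk of the bookkeeping, and your sketch of it matches the paper's construction. Where you diverge is the core step. The paper restricts to \emph{pure} inputs, invokes Ballentine's Pure State Factor Theorem to write $C_\b\rz C_\b\dg=U\r\,U\dg\ox\xi(\r)$, shows $\xi(\r)$ is pure, and then applies a bespoke function constancy criterion (its Appendix~A) to conclude the ancilla factor $\ketbra{b}{b}$ is input-independent. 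You instead evaluate the identity $\Phi(\r)=p(\r)\,U\r\,U\dg$ at $\r_1$, $\r_2$, and their midpoint, use linearity of $\Phi$, and strip off $U$ to get $(p(\r_1)-p(\r_2))(\r_1-\r_2)=0$. This is shorter, avoids both auxiliary theorems, and (since it only uses $U\dg U=\Id$ to cancel the conjugation and to compute traces) covers the isometry generalization of the paper's Appendix~B for free. The trade-offs: your argument needs the hypothesis at the mixed state $\tfrac12(\r_1+\r_2)$, so it does not yield the paper's remark that assuming the hypothesis on pure inputs alone suffices; and it does not produce the structural conclusion $C_\b\rz C_\b\dg=U\r\,U\dg\ox\ketbra{b}{b}$, which the paper reuses in its Appendix~B proposition. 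Two small points of hygiene: under the paper's convention density operators need not be normalized, so $p=\Tr\circ\Phi$ equals the probability only on normalized inputs (the probability in general is $\Tr(\Phi(\r))/\Tr(\r)$) --- your convexity argument should be phrased on normalized $\r_1,\r_2$, as it implicitly is; and the identification of the proportionality scalar in the hypothesis with $p(\r)$ requires the one-line trace computation that is the paper's Corollary~3.5.
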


The proof of the theorem gives a tiny bit more:
It suffices to assume that \C\ computes $U_\mu$ along $\mu$ just on pure inputs.

To simplify notation, we assumed in Setting~\ref{s1}.E that principal inputs and outputs of \C\ are in the same Hilbert space and that the ancillas and garbage are in the same Hilbert space.
Accordingly, in Theorem~\ref{t:ii}, \C\ computes a unitary operator (rather than unitary transformation) along path $\mu$.
The generalization where the two assumptions are dropped is obvious.

\begin{corollary}\label{c:iimult}
Let \G\ be a set of computation paths through a quantum circuit \C.
Suppose that \C\ computes a unitary operator along every computation path in \G, possibly different operators along different paths.
Then the probability that a computation of \C\ follows a path in \G\ does not depend on the input.
\end{corollary}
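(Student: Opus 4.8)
\noindent\textbf{Proof proposal for Corollary~\ref{c:iimult}.}
The plan is to reduce the corollary to Theorem~\ref{t:ii} via additivity of probabilities over mutually exclusive paths. First I would record that the set of \emph{all} computation paths through \C\ is countable: \C\ has finitely many gates, and building a path means choosing, gate by gate in any order consistent with the execution schedule, an outcome $\mu(G)$ from the outcome set of the single $G$-measurement $M_\mu(G)=\s_G\iset{\mu(F):F\in\CS{G}}$ determined by the earlier choices; each such outcome set is countable, so only countably many paths result. In particular \G\ is countable. Next I would observe that, on any fixed full input $\rz=\r\ox\a$, a single execution of \C\ records exactly one classical outcome at every gate $G$, and those recorded outcomes automatically satisfy the coherence requirement of Setting~\ref{s1}.D; hence every execution follows exactly one path. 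Therefore the event ``the computation follows a path in \G'' is the disjoint union over $\mu\in\G$ of the events ``the computation follows $\mu$'', and the probability in question equals $\sum_{\mu\in\G}P_\C(\mu\,|\rz)$. (If one prefers, this identity may simply be taken as the definition of the probability that a computation of \C\ follows a path in \G.)

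Now I would apply Theorem~\ref{t:ii} to each $\mu\in\G$ separately. The hypothesis of the corollary is exactly that \C\ computes some unitary operator $U_\mu$ along $\mu$, and the theorem is indifferent to whether $U_\mu$ varies with $\mu$; so each summand $P_\C(\mu\,|\rz)$ is independent of \r. Finally, a countable sum of nonnegative functions each independent of \r\ is itself independent of \r: the partial sums over finite subsets of \G\ are constant in \r\ and bounded above by $1$, hence increase to a limit that is again constant in \r, and that limit is the desired probability.

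The only step that calls for care is the decomposition: one must know that the probabilistic semantics of \C\ genuinely assigns, to each full input, a probability distribution over paths, so that the events attached to distinct paths are disjoint and the probability of a set of paths is the corresponding sum. Granting this — which is built into the way $P_\C(\mu\,|\rz)$ was defined in Theorem~\ref{t:ii} — there is no further obstacle; countability of \G\ in particular removes any measure-theoretic subtlety in forming the sum, and the corollary follows at once from Theorem~\ref{t:ii}.
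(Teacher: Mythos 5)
Your proposal is correct and matches the paper's (implicit) argument: the corollary is presented there as an immediate consequence of Theorem~\ref{t:ii}, obtained exactly as you do by writing the probability of following a path in \G\ as the sum $\sum_{\mu\in\G}P_\C(\mu\,|\rz)$ over the countably many mutually exclusive paths and noting that each summand is input-independent. Your extra care about countability of the path set and disjointness of the path events is sound and only makes explicit what the paper leaves unsaid.
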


The \G\ in the corollary can represent the set of good paths mentioned in the motivating paragraphs.

\subsubsection*{Related work}
In 2017, Vadym Kliuchnikov conjectured that, if a quantum circuit computes a unitary operator (the same unitary operator along every computation path), then the probability that its computation follows a particular path does not depend on the input.
His conjecture provoked this investigation.
We confirm the conjecture.

\begin{corollary}\label{c:iiall}
Suppose that a quantum circuit \C\ computes a unitary operator $U$ in the sense that \C\ computes $U$ along every computation path.
Then the probability that a computation of \C\ follows a particular path does not depend on the input.
\end{corollary}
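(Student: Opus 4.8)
The plan is to obtain Corollary~\ref{c:iiall} as an immediate instance of Theorem~\ref{t:ii}. Fix an arbitrary computation path $\mu$ through \C. By the hypothesis --- namely that \C\ computes $U$ along every computation path --- \C\ in particular computes $U$ along $\mu$; that is, in the language of Theorem~\ref{t:ii}, \C\ computes the unitary operator $U_\mu := U$ along $\mu$, so that $\Tr_\A\big(\Out_\C(\mu\,|\,\rz)\big)$ and $U\r U\dg$ represent the same state in \H\ for every $\r\in\DO{\H}$. Theorem~\ref{t:ii} then yields exactly what we want: $P_\C(\mu\,|\,\r)$ does not depend on \r. Since $\mu$ was an arbitrary path, this proves the corollary.

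Equivalently, one may deduce the corollary from Corollary~\ref{c:iimult} by taking the set \G\ of paths to be the singleton $\set{\mu}$: the assumption of Corollary~\ref{c:iimult}, that \C\ computes a unitary operator along every path in \G, is precisely what we have, and its conclusion, that the probability of following a path in \G\ is input-independent, is precisely the independence of $P_\C(\mu\,|\,\r)$ from \r.

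I do not expect any genuine obstacle: the statement is a verbatim specialization of results already established. The one point deserving mention is that here ``\C\ computes a unitary operator $U$'' is, by the explicit wording of the statement, to be read as ``\C\ computes $U$ along every computation path,'' so there is nothing to prove about relating a global operational notion of computing $U$ to the path-wise notion used in Theorem~\ref{t:ii}. One could also note, using the strengthening remarked immediately after Theorem~\ref{t:ii}, that it is enough to assume \C\ computes $U$ along every path only on pure principal inputs.
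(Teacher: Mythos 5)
Your proof is correct and matches the paper's (implicit) treatment: the corollary is stated as an immediate specialization of Theorem~\ref{t:ii} with $U_\mu=U$ for every path $\mu$, which is exactly your argument. The alternative route via Corollary~\ref{c:iimult} with a singleton \G\ is also fine.
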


Paul Busch formulated the ``no information without disturbance'' principle for quantum measurements \cite[Theorem~2]{Busch} and sketched the proof of the principle; a slightly more informative version of his sketch appeared earlier in a book by Paul Busch, Pekka J. Lahti, and Peter Mittelstaedt \cite[p.~32]{BLM}.
Busch's principle is a special case of Corollary~\ref{c:iiall} where \C\ is just a single measurement and $U$ the identity operator.
(Busch works with positive operator-valued measurements, POVMs, rather than general measurements but this is not important for our story.
``POVMs are best viewed as a special case of the general measurement formalism, providing the simplest means by which one can study general measurement statistics, without the necessity for knowing the post-measurement state'' \cite[\S2.2.6, Box~2.5]{NC}.
Under our broader assumptions in Setting~\ref{s1}, POVMs still can be viewed as a special case of the general measurement formalism.)

\subsubsection*{Organization of this paper}

The input independence theorem is proved in the rest of this paper.
In order to analyze circuit computation, it is useful to untangle and
separate distinct computation paths.
We do that in \S\ref{s:reduce} where we introduce measurement trees whose branches can be viewed as their computation paths.
The reduction theorem, Theorem~\ref{t:reduction}, asserts that for
every quantum circuit \C\ there exist a measurement tree \T\ and a
one-to-one correspondence $\mu \mapsto \b_\mu$ from the computation
paths of \C\ to the branches of \T\ such that the probability that
computation of \C\ follows $\mu$ on a given (full) input is exactly
the probability that the computation of \T\ follows branch $\b_\mu$ on that input and the two computations produce the same output.

After investigating some basic properties of measurement trees in \S\ref{s:trees}, we prove the input independence theorem for measurement trees, Theorem~\ref{t:iit}, in \S\ref{s:ii}.
By the reduction theorem, the input independence theorem for circuits follows from that for measurement trees.

Appendix~A establishes a function constancy criterion which is used in \S\ref{s:ii} and which is arguably of independent interest.
Relegating that task to an appendix makes the exposition in \S\ref{s:ii} cleaner.
Appendix~B is devoted to a generalization of Theorem~\ref{t:ii}.

\section{Reduction to measurement trees}\label{s:reduce}

As usual we take density operators over a Hilbert space \K\ to be nonzero, positive semidefinite, Hermitian operators on \K.
However, instead of requiring them to have trace 1, we only require them to have finite traces, i.e., to be trace-class operators.
The set of density operators over \K\ will be denoted \DO{\K}.
A density operator is \emph{normalized} if its trace is 1.

\begin{convention}\label{con}
We use density operators, not necessarily normalized, to represent (possibly mixed) states over \K. A density operator $\rho$ represents the same state as its normalized version $\rho/\Tr(\rho)$.
If a measurement $M = \ang{L_j: j\in J}$ in the state (represented by) $\rho$ produces outcome $j$, then we will usually use density operator $L_j\rho L_j\dg$ (rather than any scalar multiple of it) to represent the post-measurement state.
Notice that $\rho\mapsto L_j\rho L_j\dg$ is a linear operator and the probability of outcome $j$ is $\Tr(L_j\rho L_j\dg)/\Tr(\rho)$. \qef
\end{convention}

To analyze computations of a quantum circuit, it is convenient to represent the circuit by a tree where the computations are represented by branches.

\begin{definition}\label{d:meas tree}
A \emph{measurement tree}, a \emph{meas tree} in short, over a Hilbert space \K\ is a directed rooted tree with a finite bound on the lengths of branches, where every non-leaf node $x$ has a measurement $M_x$ over \K\ associated to it, and the edges emanating from $x$ are labeled in one-to-one fashion with the classical outcomes of $M_x$. \qef
\end{definition}

All our results about meas trees would remain true if the bounded-branch-length assumption were replaced by a weaker assumption that every branch is of finite length.
But the bounded-branch-length trees suffice to support the reduction theory, Theorem~\ref{t:reduction}, below.

Let $\T$ be a meas tree. A \emph{route} on \T\ is a sequence of edges $(x_0,x_1), (x_1,x_2), \dots, (x_{n-1},x_n)$ where $x_0$ is the root;
it may be given by the labels of those edges: $(o_1,\dots,o_n)$ where $o_{i+1}$ is an outcome of the measurement $M_{x_i}$.
The same notation $(o_1,\dots,o_n)$ may be used for the final node $x_n$ of the route; to avoid ambiguity, we may also use notation \End{o_1,\dots,o_n} for $x_n$.
A route is a \emph{branch} if its final node is a leaf.

View \T\ as the algorithm which works as follows on input $\sigma_0\in\DO{\K}$.
If the root is a leaf (and thus the only node), do nothing.
Otherwise, perform the root's measurement in state $\sigma_0$ producing,  with some probability $p_1$, an outcome $o_1$ and post-measurement state $\sigma_1$. If node \End{o_1} is not a leaf, perform its measurement in state $\sigma_1$ producing,  with some probability $p_2$, an outcome $o_2$ and post-measurement state $\sigma_2$, and so on until a leaf  is reached.
Let \b\ be the resulting branch $(o_1, o_2, \dots, o_N)$.
The (quantum) output $\Out_\T(\b |\sigma_0)$ of this computation is the post-measurement state $\sigma_N$ resulting from the last measurement performed.
The probability $P_\T(\b |\sigma_0)$ that computation of \T\ follows
branch \b\ on input $\sigma_0$ is $p_1\!\cdot p_2\cdots p_N$.

Consider quantum circuits \C\ with (full) inputs and outputs in some Hilbert space \K.
Assume Setting~\ref{s1} (so that $\K = \H \ox \A$ but the internal structure of \K\ will play no role in this section.)
Without loss of generality, we may assume that \C\ comes with a fixed linear order of its gates.

A \emph{bout} of gates of \C\ is a nonempty set $B$ of gates where no gate is a (quantum or  classical) prerequisite for another gate; thus all $B$ gates could be executed in parallel after all their prerequisites have been executed.

A \emph{schedule} of the circuit is a sequence
$(B_1, B_2, B_3, \dots, B_N)$
of disjoint gate bouts such that every gate set
$\bigcup_{i\le n} B_i$
is closed under prerequisites and $\bigcup_{n\le N} B_n$ contains all the gates.

A schedule is \emph{linear} if all its bouts are (composed of) single gates.
Paths through a circuit with linear schedule $(G_1, G_2, \dots, G_N)$ can be represented by sequences $(o_1, o_2, \dots, o_N)$ where each $o_n$ is an outcome of the $G_n$ measurement $\Sigma_{G_n}\iset{o_i: G_i \in \CS{G_n}}$.

\begin{lemma}\label{l:linear}
Let \C\ be a quantum circuit with a linear schedule $(G_1, G_2, \dots, G_N)$ with (full) inputs and outputs in Hilbert space \K.
There exists a meas tree \T\ over \K\ such that for every path $\mu$ through circuit \C\ and every $\sigma$ in \DO{\K} we have:
\begin{enumerate}
\item $\b_\mu = (\mu(G_1), \mu(G_2), \dots, \mu(G_N))$ is a branch of \T, and every branch of \T\ is obtained that way,
\item $\Out_\C(\mu|\sigma) = \Out_\T(\b_\mu | \sigma)$, and
\item $P_\C(\mu|\sigma) = P_\T(\b_\mu | \sigma)$.
\end{enumerate}
\end{lemma}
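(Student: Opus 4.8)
The plan is to build the tree $\T$ directly from the linear schedule by unfolding the branching that the circuit's measurements and selection functions induce. The key point is that along a linear schedule the circuit really does behave like a tree already: at step $n$ the gate $G_n$ fires a measurement whose identity is determined by the outcomes $o_i$ for $G_i \in \CS{G_n}$, all of which have index $i < n$ because classical sources are prerequisites and $\bigcup_{i\le n-1}B_i$ is closed under prerequisites. So there is no genuine "merging" to worry about — distinct outcome prefixes stay distinct.

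**Construction.** I would define $\T$ to have nodes indexed by the finite sequences $(o_1,\dots,o_n)$ with $0 \le n \le N$ such that there exists a path $\mu$ through $\C$ with $\mu(G_i) = o_i$ for $i \le n$; equivalently (and this is the cleaner inductive description), the empty sequence is the root, and a node $x = (o_1,\dots,o_{n})$ with $n < N$ is assigned the measurement $M_x := \Sigma_{G_{n+1}}\iset{o_i : G_i \in \CS{G_{n+1}}}$, with an edge labeled $o_{n+1}$ to the child $(o_1,\dots,o_{n+1})$ for each outcome $o_{n+1}$ of $M_x$; a node with $n = N$ is a leaf. One checks this is a well-defined meas tree over $\K$: every branch has length exactly $N$, so the bounded-branch-length condition holds, and the edges out of $x$ are labeled bijectively with the outcomes of $M_x$ by construction. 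Because different $G_n$-measurements have disjoint outcome sets (Setting \ref{s1}.C) and every $G_n$-measurement is in the range of $\Sigma_{G_n}$, the sequences arising as branches of $\T$ are exactly the sequences $(\mu(G_1),\dots,\mu(G_N))$ for paths $\mu$; this is item (1), and it also shows $\mu \mapsto \b_\mu$ is a bijection onto the branches (injective because $\mu$ is recovered from $\b_\mu$ coordinatewise, since each $\mu(G_n)=o_n$ and the coherence condition $M_\mu(G_n)=\Sigma_{G_n}\iset{\mu(F):F\in\CS{G_n}}$ is exactly what the tree enforces).

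**Matching outputs and probabilities.** Items (2) and (3) follow by tracking the state along the $n$-th step and inducting on $n$. Fix a path $\mu$ and input $\sigma$. Unwinding the definition of circuit execution under a linear schedule from \S\ref{s:reduce}: firing $G_n$ in state $\sigma_{n-1}$ means performing the measurement $M_\mu(G_n) = \Sigma_{G_n}\iset{\mu(F):F\in\CS{G_n}}$, observing outcome $\mu(G_n)$, and passing to the (unnormalized, per Convention \ref{con}) post-measurement state $L\sigma_{n-1}L\dg$ where $L$ is the Kraus operator indexed by $\mu(G_n)$ in that measurement, while multiplying the running probability by $\Tr(L\sigma_{n-1}L\dg)/\Tr(\sigma_{n-1})$. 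Running $\T$ along $\b_\mu$ on input $\sigma_0 = \sigma$ does literally the same thing at each node, because $M_{(\mu(G_1),\dots,\mu(G_{n-1}))} = \Sigma_{G_n}\iset{\mu(F):F\in\CS{G_n}} = M_\mu(G_n)$ and the edge taken is the one labeled $\mu(G_n)$. So the state sequences and the stepwise probability multipliers coincide; the final states agree, giving $\Out_\C(\mu|\sigma) = \Out_\T(\b_\mu|\sigma)$, and the telescoping products of multipliers agree, giving $P_\C(\mu|\sigma) = P_\T(\b_\mu|\sigma)$.

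**Main obstacle.** There is no deep obstacle; the lemma is essentially a bookkeeping unfolding. The one place to be careful is making sure the tree construction is honest about which sequences are legitimate — i.e., that $M_x$ is only applied when its classical-source outcomes have already been fixed along the route to $x$, which is where the closure-under-prerequisites property of $\bigcup_{i\le n}B_i$ (here each bout is a singleton) gets used, and that the range-of-$\Sigma_G$ and disjoint-outcome-sets conventions are invoked so that branches of $\T$ correspond exactly to paths of $\C$ with no spurious or missing branches. The rest is a routine induction on the schedule length $N$.
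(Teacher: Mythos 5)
Your construction is exactly the paper's: nodes are initial segments $(o_1,\dots,o_n)$ of paths, with node $(o_1,\dots,o_n)$ assigned the measurement $\Sigma_{G_{n+1}}\iset{o_i : G_i \in \CS{G_{n+1}}}$, and the three claims verified by unwinding the step-by-step execution. The paper leaves that verification as ``easy to verify''; you spell it out, but the approach is the same and correct.
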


\begin{proof}
We construct the desired meas tree \T.
The nodes of \T are initial segments  $(o_1, \dots, o_n)$ of paths through \C.
If $n<N$ then segments $(o_1, \dots, o_n, o_{n+1})$ are children of
the segment $(o_1, \dots, o_n)$, the label of the edge from segment
$(o_1, \dots, o_n)$ to segment $(o_1, \dots, o_n, o_{n+1})$ is
$o_{n+1}$, and the measurement assigned to segment $(o_1, \dots, o_n)$
is $\Sigma_{G_{n+1}}\!\iset{o_i: G_i \in \CS{G_{n+1}}}$.
In particular, the empty segment is assigned the unique measurement of the first gate $G_1$.
As in the definition of computation paths in Setting~\ref{s1},
this description of children incorporates the coherence requirement for the measurements along the route $(o_1, \dots, o_{n+1})$.
All three claims are easy to verify from the definition of \T.
\end{proof}

Given measurements $M_1 = \iset{A_i: i\in I}$ and $M_2 = \iset{B_j:
  j\in J}$ over Hilbert spaces $\K_1, \K_2$ respectively, consider the
tensor product $M_1 \ox M_2$, i.e.\ the measurement \iset{A_i\ox B_j:
(i,j)\in I\times J} over $\K_1 \ox \K_2$.
If the probabilities of outcomes $i,j$ on inputs $\sigma_1, \sigma_2$ are $p_i, q_j$ respectively, then the probability of outcome $(i,j)$ on input $\sigma_1\ox\sigma_2$ is $p\cdot q$.
The tensor products of more than two (but finitely many) measurements are defined similarly.

The following theorem extends Lemma~\ref{l:linear} to the case of schedules which may not be linear.

\begin{theorem}[Reduction]\label{t:reduction}
Let \C\ be a quantum circuit with (full) inputs and outputs in Hilbert space \K.
There exist a meas tree \T\ over \K\ and a one-to-one correspondence
$\mu \mapsto \b_\mu$ from the set of paths through \C\ onto the set of
branches of \T\ such that for every path $\mu$ through circuit \C\ and
every $\sigma$ in \DO{\K} we have:
\begin{enumerate}
\item $\Out_\C(\mu|\sigma) = \Out_\T(\b_\mu | \sigma)$, and
\item $P_\C(\mu|\sigma) = P_\T(\b_\mu | \sigma)$.
\end{enumerate}
\end{theorem}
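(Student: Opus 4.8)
The plan is to reduce to Lemma~\ref{l:linear} by \emph{linearizing the schedule}, the only genuinely new ingredient over that lemma being that firing a bout of causally independent gates amounts to firing those gates one at a time, in any order. First note that the set of paths through \C\ does not depend on the chosen schedule: the definition of a path in Setting~\ref{s1}.D is a well-founded recursion along the acyclic classical-source relation of \C, for one processes the gates in any topological order, and at a gate $G$ the outcomes $\mu(F)$, $F\in\CS{G}$, are already fixed, hence so is the measurement $\Sigma_G\iset{\mu(F):F\in\CS{G}}$, and $\mu(G)$ may then be taken to be any outcome of it. So ``the paths of \C'' is unambiguous, and $\Out_\C(\mu\,|\,\sigma)$ and $P_\C(\mu\,|\,\sigma)$ may be compared across different schedules of \C. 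Now fix a schedule $(B_1,\dots,B_N)$ and form its \emph{linear refinement}: replace each bout $B_n=\{G_{n,1},\dots,G_{n,k_n}\}$, the gates listed in any fixed order, by the block of singletons $\{G_{n,1}\},\dots,\{G_{n,k_n}\}$. A short check shows the resulting sequence is again a schedule --- every initial union of its bouts equals an initial union of the $B_k$'s together with a prefix of some $B_n$, which is closed under prerequisites because the prerequisites of each $G_{n,j}$ all lie in $\bigcup_{k<n}B_k$ --- and it is linear by construction.

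Next I would show that passing from $(B_1,\dots,B_N)$ to its linear refinement changes neither $\Out_\C(\mu\,|\,\sigma)$ nor $P_\C(\mu\,|\,\sigma)$, for every path $\mu$ and every $\sigma\in\DO{\K}$. By induction it suffices to split one bout into two consecutive ones, and, since the measurement performed by a bout is --- by the circuit semantics of \cite{G250} --- the tensor product of the measurements of its gates extended by the identity on the untouched wires (legitimate because the gates of a bout pairwise share no wire, else one would be a prerequisite of the other), this reduces to the following elementary fact, essentially the tensor-product-of-measurements remark preceding the theorem together with the bookkeeping of Convention~\ref{con}: performing $\iset{A_i\ox B_j\ox\Id}$ in one step produces outcome $(i,j)$ with post-state $(A_i\ox B_j\ox\Id)\,\sigma\,(A_i\ox B_j\ox\Id)\dg$ and probability the trace of that divided by $\Tr(\sigma)$, and exactly the same post-state and probability result from performing $\iset{A_i\ox\Id\ox\Id}$ and then $\iset{\Id\ox B_j\ox\Id}$, because the two factors commute (disjoint tensor legs) and step probabilities multiply. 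Iterating, the tower of bout measurements of \C\ matches the tower of single-gate measurements of its linear refinement, level against level. Applying Lemma~\ref{l:linear} to \C\ equipped with its linear refinement $(G_1',\dots,G_N')$ then yields a meas tree \T\ and the assignment $\b_\mu=(\mu(G_1'),\dots,\mu(G_N'))$, which is a bijection from the paths of \C\ onto the branches of \T\ (onto by clause~(1) of the lemma, injective because distinct paths differ at some gate hence in some coordinate); and clauses~(1) and~(2) of the present theorem follow from clauses~(2) and~(3) of the lemma combined with the invariance just established.

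The step I expect to be the main obstacle is this invariance --- that sequentializing a bout leaves every per-path output state and probability untouched. Conceptually it is the assertion that the order of causally independent gates is immaterial; technically, although the core identity (a product operator $A_i\ox B_j$ conjugating a joint state factors through the marginal traces) is trivial, it must be run uniformly over all of $\DO{\K}$ --- including entangled $\sigma$ and using the not-necessarily-normalized density-operator convention --- and then threaded carefully through the recursive definitions of $\Out_\C$ and $P_\C$ from \cite{G250}. The remaining steps --- schedule-independence of the path set, existence of the linear refinement, and bijectivity of $\mu\mapsto\b_\mu$ --- are routine given Lemma~\ref{l:linear}.
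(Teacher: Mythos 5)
Your proposal is correct, but it linearizes in the opposite direction from the paper. The paper builds a \emph{new} circuit $\C'$ whose gates are the bouts $B_n$ themselves, each assigned the tensor products $M_{F_1}\ox M_{F_2}\ox\cdots$ of measurements of its constituent gates (with selection functions assembled from the component selections); $\C'$ has a linear schedule by construction, so Lemma~\ref{l:linear} applies to it directly, and all that remains is the observation that \C\ under its bout schedule and $\C'$ have the same per-path outputs and probabilities --- immediate because firing a bout \emph{is}, semantically, performing that tensor-product measurement. You instead keep \C\ and refine its schedule into singletons, which obliges you to prove the sequentialization invariance (performing $\iset{A_i\ox B_j\ox\Id}$ in one step agrees with performing $\iset{A_i\ox\Id\ox\Id}$ and then $\iset{\Id\ox B_j\ox\Id}$, by commutation on disjoint tensor legs and telescoping of the step probabilities). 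The two routes rest on the same underlying fact but yield different meas trees: the paper's tree has one level per bout, yours one level per gate; both satisfy the theorem. Your route costs the extra (easy) invariance lemma plus one point you leave implicit --- that the selection function of a gate in $B_n$ cannot depend on outcomes of other gates in $B_n$, since its classical sources all lie in $\bigcup_{k<n}B_k$, so sequentializing a bout never changes which measurements get selected. The paper's route avoids the invariance lemma at the price of defining the composite gates and their selection functions explicitly. Both arguments are valid reductions to Lemma~\ref{l:linear}.
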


\begin{proof}
Let $(B_1, B_2, \dots, B_N)$ be the schedule of \C.
By Lemma~\ref{l:linear}, it suffices to construct a circuit $\C'$ with
linear schedule admitting a one-to-one correspondence $\mu \mapsto
\mu'$ from the paths through \C\ onto those of $\C'$ so that $\Out_\C(\mu|\sigma) = \Out_\T(\b_\mu | \sigma)$ and
$P_\C(\mu|\sigma) = P_{\C'}(\mu'|\sigma)$ for all $\sigma$ in \DO{\K}
and all paths $\mu$ through \C.

We construct the desired circuit $\C'$.
The gates of $\C'$ are the bouts $B_n$ of the schedule of \C.
If the \C\ gates of $B_n$ are $F_1 < F_2 < \dots$ (in the fixed order of \C\ gates), then $B_n$ measurements are tensor products $M_{F_1} \ox M_{F_2} \ox\dots$ where each $M_{F_j}$ is an $F_j$ measurement in \C.
Since, for every gate $F$ in \C, the outcomes of different $F$ measurements are disjoint, the outcomes of different $B_n$ measurements are disjoint.

Next we define a selection functions $\Sigma'_n = \Sigma'_{B_n}$ for each $n = 1, \dots, N$.
A $\C'$ gate $B_i$ is a classical source for a $\C'$ gate $B_n$ if some \C\ gate in $B_i$ is a classical source for a \C\ gate in $B_n$.
If the \C\ gates of $B_i$ are $E_1 < E_2 < \dots$ and, in runtime, they produce outcomes $o_1, o_2, \dots$ respectively, then the outcome $O_i = (o_1, o_2, \dots)$ of $B_i$ is sent to $B_n$.
Note that $O_i$ determines the outcomes of all \C\ gates in $B_i$.
It follows that, for every \C\ gate $F$ in $B_n$, the outcomes $O_i$ determine outcomes of all classical sources $E$ of $F
$ in \C, and therefore determine an $F$ measurement $M_F$ chosen by the selection function $\Sigma_F$ of \C.
If the \C\ gates of $B_n$ are $F_1 < F_2 < \dots$, then
$\Sigma'_n\iset{O_i: B_i \in \CS{B_N}} = M_{F_1} \ox M_{F_2} \ox\dots$.

Finally, for every path $\mu$ through \C, we define a path $\mu'$ through $\C'$.
For each $n = 1,\dots, N$, we need to define an outcome $\mu'(B_n)$.
If the \C\ gates of $B_n$ are $F_1 < F_2 < \dots$, then $\mu'(B_n) = (\mu(F_1), \mu(F_2), \dots)$.
It is easy to see that every path through $\C'$ is obtained this way and that $\Out_\C(\mu|\sigma) = \Out_\T(\b_\mu | \sigma)$ and
$P_\C(\mu|\sigma) = P_{\C'}(\mu'|\sigma)$ for all $\sigma$ in \DO{\K}
and all paths $\mu$ through \C.
\end{proof}

\section{Cumulative operators}
\label{s:trees}

Let \T\ be a meas tree over a Hilbert space \K.
A branch \b\ of \T\ is \emph{attainable on} input $\sigma$ in DO(\K) if the probability $P(\b |\sigma)$ that computation of \T\ on input $\sigma$ follows branch \b\ is strictly positive.

\begin{definition}
For every branch $\b = (o_1, o_2, \dots, o_N)$ of \T, the \emph{cumulative operator} $C_\b$ is the composition
\[ C_\b = A_N \circ A_{N-1} \circ \cdots \circ A_2 \circ A_1 \]
where each $A_{n+1}$ is the operator of the measurement at node \End{o_1,\dots, o_n} that produces the outcome $o_{n+1}$. In the special case of the one-node tree, the length-zero branch has $C_\b=\Id$. \qef
\end{definition}

Notice that, for every input $\sigma$ in DO(\K), $C_\b\sigma C_\b\dg$ is exactly $\Out_\T(\b | \sigma)$.

\begin{lemma}\label{l:cum}
\begin{enumerate}\mbox{}
\item $\sum_\b C_\b\dg C_\b = \Id$,
\item $\displaystyle\pbs = \frac{\Tr(C_\b\sigma C_\b\dg)}{\Tr(\sigma)}$,
\item $\pbs=0 \iff  C_\b\sigma C_\b\dg=0$.
\end{enumerate}
\end{lemma}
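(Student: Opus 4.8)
The plan is to establish the three items in sequence, each feeding into the next.

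For item~1 I would induct on the maximal branch length of \T\ (finite by definition of a meas tree). In the base case \T\ is a single node, its only branch is the empty one with $C_\b=\Id$, and the claim reads $\Id\dg\Id=\Id$. For the inductive step, write the root measurement as $\ang{A_i:i\in I}$ and, for each outcome $i$, let $\T_i$ be the subtree hanging from the $i$-child of the root; this is a meas tree of strictly smaller maximal branch length. Each branch \b\ of \T\ is an outcome $i$ followed by a branch $\b'$ of $\T_i$, and then $C_\b=C_{\b'}\circ A_i$, so $C_\b\dg C_\b=A_i\dg(C_{\b'}\dg C_{\b'})A_i$. Summing over all branches, grouping by the first outcome $i$, applying the induction hypothesis $\sum_{\b'}C_{\b'}\dg C_{\b'}=\Id$ inside each $\T_i$, and finally using the defining identity $\sum_i A_i\dg A_i=\Id$ of a measurement yields $\sum_\b C_\b\dg C_\b=\Id$. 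The one point needing attention is convergence: the index set $I$, hence the set of branches, may be countably infinite, and the relevant sums converge only in the weak operator topology; but every operator summed here is positive, so pairing all the sums against an arbitrary $\psi\in\K$ turns everything into nonnegative reals, and the regrouping is then just the rearrangement of a nonnegative double series.

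For item~2, fix $\b=(o_1,\dots,o_N)$ and $\sigma=\sigma_0$, and let $\sigma_n$ denote the (unnormalized) post-measurement state after the first $n$ steps of running \T\ on $\sigma_0$ along \b. By Convention~\ref{con}, $\sigma_n=A_n\sigma_{n-1}A_n\dg$, so by induction $\sigma_N=C_\b\sigma C_\b\dg$, and the conditional probability of the $n$-th outcome along \b\ is $p_n=\Tr(\sigma_n)/\Tr(\sigma_{n-1})$ whenever $\Tr(\sigma_{n-1})>0$. If $\Tr(\sigma_n)>0$ for every $n\le N$, then $\pbs=p_1\cdots p_N$ telescopes to $\Tr(\sigma_N)/\Tr(\sigma_0)=\Tr(C_\b\sigma C_\b\dg)/\Tr(\sigma)$. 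Otherwise, let $m$ be least with $\Tr(\sigma_m)=0$; since $\sigma_0$ is a nonzero density operator, $m\ge1$, and since a positive trace-class operator of trace $0$ is the zero operator, $\sigma_m=0$, hence $\sigma_N=C_\b\sigma C_\b\dg=0$ and also $p_m=0$, so both sides of the identity vanish. This degenerate case is the only fiddly bit.

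Item~3 then follows immediately from item~2: $\Tr(\sigma)>0$ for any density operator, so $\pbs=0$ iff $\Tr(C_\b\sigma C_\b\dg)=0$; and $C_\b\sigma C_\b\dg$ is a positive semidefinite trace-class operator, which equals the zero operator precisely when its trace is $0$. I expect item~1 to carry the real content, its only genuine obstacle being the careful handling of the possibly-infinite operator sums and their weak-operator convergence across the inductive regrouping.
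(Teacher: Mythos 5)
Your proof is correct and follows essentially the same route as the paper, whose entire proof is the single line ``Induction on the number of non-leaf nodes in \T.'' Your version simply fills in the details of that induction (taking the maximal branch length as the induction variable, which is in fact the safer choice since the number of non-leaf nodes may be infinite when measurements have countably many outcomes), and your handling of the weak-operator convergence and of the zero-probability degenerate case is careful and sound.
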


\begin{proof}
Induction on the number of non-leaf nodes in \T.
\end{proof}

By the first two claims of the lemma, the indexed family of operators $C_\beta$ a measurement (that could be called the \emph{aggregate measurement} of \T\ in terms of \cite{G250}).

\begin{setting}\label{s2}
Let \A, \H, and \ket a be as in Setting~\ref{s1}.E.
Consider a meas tree \T\ with principal inputs and outputs in DO(\H).
Like circuits, \T\ may use ancillas and may produce garbage; the full input of \T\ has the form $\r\ox \a$ where \r\ ranges over DO(\H) and \ket{a} is a fixed unit vector in \A.
Again, we abbreviate $\r\ox \a$ to \rz.
Let $U$ be a unitary operator on \H.
\end{setting}

\begin{definition}\label{d:det2}
\T\ \emph{computes $U$ on principal input \r\ along branch \b} if the operator $\Tr_\A(C_\b \rz C_\b\dg)$ agrees with $U\r\,U\dg$ up to a scalar factor and thus represents the same state. \qef
%\T\ \emph{computes $U$ on principal input \r} if it computes $U$ on principal input \r\ along every branch attainable on $\rz = \r\ox\a$.
%\T\ \emph{computes} $U$ if it computes $U$ on every principal input $\rho\in \DO{\H}$. \qef
\end{definition}

\begin{corollary}\label{c:det2}
If \T\ computes $U$ on principal input \r\ along branch \b\ then\\ $\Tr_\A(C_\b \rz C_\b\dg) = P_\T(\b|\rz)\cdot U\r\,U\dg$.
\end{corollary}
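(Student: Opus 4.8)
The plan is to combine the defining ``up to a scalar factor'' relation from Definition~\ref{d:det2} with the trace bookkeeping already recorded in Lemma~\ref{l:cum}. By Definition~\ref{d:det2}, computing $U$ on principal input $\r$ along $\b$ means there is a nonnegative scalar $c$ (depending a priori on $\r$ and $\b$) with $\Tr_\A(C_\b\rz C_\b\dg) = c\cdot U\r\,U\dg$. The whole content of the corollary is to identify $c$ as $P_\T(\b\,|\,\rz)$. The natural move is to take the trace of both sides, using that the trace over $\H\ox\A$ factors through the partial trace: $\Tr_{\H\ox\A}(C_\b\rz C_\b\dg) = \Tr_\H\big(\Tr_\A(C_\b\rz C_\b\dg)\big)$.

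First I would compute the left-hand side. By Lemma~\ref{l:cum}(2), applied with the full input $\sigma = \rz$, we get $\Tr(C_\b\rz C_\b\dg) = P_\T(\b\,|\,\rz)\cdot\Tr(\rz)$, and since $\rz = \r\ox\a$ with $\ket a$ a unit vector, $\Tr(\rz) = \Tr(\r)$. Next I would compute the right-hand side: $\Tr\big(c\cdot U\r\,U\dg\big) = c\cdot\Tr(U\r\,U\dg) = c\cdot\Tr(\r)$, using cyclicity of the trace and $U\dg U = \Id$. Equating the two expressions gives $P_\T(\b\,|\,\rz)\cdot\Tr(\r) = c\cdot\Tr(\r)$, and since $\r$ is a (nonzero) density operator, $\Tr(\r)>0$, so we may cancel it to conclude $c = P_\T(\b\,|\,\rz)$. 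Substituting back yields $\Tr_\A(C_\b\rz C_\b\dg) = P_\T(\b\,|\,\rz)\cdot U\r\,U\dg$, which is exactly the claim.

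There is one degenerate case to dispatch, namely when $P_\T(\b\,|\,\rz)=0$: then by Lemma~\ref{l:cum}(3) we have $C_\b\rz C_\b\dg=0$, hence its partial trace vanishes, and $c=0=P_\T(\b\,|\,\rz)$, so the identity holds trivially. I do not expect a genuine obstacle here; the only thing to be slightly careful about is that the scalar in Definition~\ref{d:det2} is nonnegative (it is, since $\Tr_\A(C_\b\rz C_\b\dg)$ is a positive semidefinite operator and $U\r\,U\dg$ is a nonzero positive semidefinite operator), and that $\Tr(\r)\neq 0$, which is part of our standing convention that density operators are nonzero. Thus the proof is a short trace computation once the scalar from the definition is pinned down.
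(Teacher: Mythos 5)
Your proposal is correct and follows essentially the same route as the paper: both pin down the scalar from Definition~\ref{d:det2} by taking traces, using Lemma~\ref{l:cum}(2), $\Tr(\rz)=\Tr(\r)$, and unitarity of $U$. Your extra remark about the degenerate case $P_\T(\b\,|\,\rz)=0$ is harmless but not needed, since the definition's requirement that $\Tr_\A(C_\b\rz C_\b\dg)$ represent the same state as $U\r\,U\dg$ already forces the scalar to be positive.
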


\begin{proof}
According to the definition of ``computes,'' the two sides of the claimed equation differ by only a positive scalar factor.
So it suffices to check that they have the same trace.
Since $\Tr(\rz) = \Tr(\r)$ and, by unitarity of $U$, $\Tr(U\r\,U\dg) = \Tr(\r)$, we have
$
\Tr\left(\Tr_\A(C_\b \rz C_\b\dg)\right)
= \Tr\left(C_\b \rz C_\b\dg\right) = \pbrz \Tr(\rz)
= \Tr\Big(\pbrz U\r\,U\dg\Big).
$
\end{proof}

\section{Input independence}
\label{s:ii}

We use Setting~\ref{s2}.

\begin{theorem}\label{t:iit}
Suppose that a meas tree \T\ computes a unitary operator $U$ on all
pure inputs along a branch \b. Then there is a vector $b$ in \A\
such that  for all (pure or not) principal inputs $\rho\in\DO{\H}$, we have:
\begin{enumerate}
\item $C_\b \rz C_\b\dg = U\rho\, U\dg \ox \ketbra{b}{b}$,
\item the probability \pbrz\ that a computation of\,\ \T\ follows branch $\b$ on input \rz\ is $\Tr(\ketbra{b}{b}) = \Vert b\Vert^2$ and thus is independent of $\rho$, and
\item if $\b$ is attainable on some input, then it is attainable on all inputs.
\end{enumerate}
\end{theorem}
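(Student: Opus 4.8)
The plan is to work entirely with the single bounded operator $C:=C_\b$ on $\H\ox\A$ and to show that it sends every vector of the form $\psi\ox a$ ($\psi\in\H$) to the product vector $U\psi\ox b$ for one fixed $b\in\A$; all three conclusions then fall out by short computations. Throughout I would use Lemma~\ref{l:cum}: $C\dg C\le\Id$ and $\pbrz=\Tr(C\rz C\dg)/\Tr(\rz)$. \emph{Step 1 (pure inputs give product vectors).} For a unit vector $\psi\in\H$ the pure input $\rho=\ketbra{\psi}{\psi}$ makes $\rz=\ketbra{\psi\ox a}{\psi\ox a}$ pure, so $C\rz C\dg=\ketbra{v_\psi}{v_\psi}$ with $v_\psi:=C(\psi\ox a)$, and $\Tr_\A(C\rz C\dg)$ is the reduced state of $v_\psi$ on $\H$. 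The hypothesis says this reduced state is a nonnegative scalar multiple of $\ketbra{U\psi}{U\psi}$, hence has rank at most one; via the Schmidt decomposition of $v_\psi$ (legitimate since $\H,\A$ are separable), a pure bipartite vector with rank-$\le1$ reduced state is a product vector, and matching the $\H$-factor against $U\psi$ and absorbing a scalar and a phase gives $v_\psi=U\psi\ox w_\psi$ for a unique $w_\psi\in\A$ (unique because $U\psi\ne0$), with $w_\psi=0$ allowed — precisely the case where $\b$ is unattainable on that input. Scaling, $C(\psi\ox a)=U\psi\ox w_{\psi/\norm\psi}$ for all nonzero $\psi$.

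\emph{Step 2 (the second factor is constant).} Here I would exploit linearity of $\psi\mapsto C(\psi\ox a)$ and $\dim\H\ge2$. For linearly independent $\psi_1,\psi_2$ the vectors $U\psi_1,U\psi_2$ are linearly independent, and expanding $C((\psi_1+\psi_2)\ox a)$ in two ways yields $U\psi_1\ox(w_{\psi_1}-w)+U\psi_2\ox(w_{\psi_2}-w)=0$, where $w$ is the second factor belonging to $\psi_1+\psi_2$; since the subspaces $\mathrm{span}(U\psi_1)\ox\A$ and $\mathrm{span}(U\psi_2)\ox\A$ meet only in $0$, both summands vanish, so $w_{\psi_1}=w=w_{\psi_2}$. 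Any two nonzero vectors of $\H$ are each linearly independent from a common third one, so all the $w_\psi$ coincide with a single $b\in\A$, and by linearity $C(\psi\ox a)=U\psi\ox b$ for every $\psi\in\H$.

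\emph{Step 3 (the three claims).} Let $T,S:\H\to\H\ox\A$ be $T\psi=\psi\ox a$ and $S\psi=\psi\ox b$; then $T\dg T=\Id$ (as $\norm a=1$), $SXS\dg=X\ox\ketbra{b}{b}$ for every operator $X$ on $\H$, and $\rz=T\rho\,T\dg$, while Step~2 reads $CT=SU$. Hence $C\rz C\dg=(CT)\rho(CT)\dg=(SU)\rho(SU)\dg=S(U\rho\,U\dg)S\dg=U\rho\,U\dg\ox\ketbra{b}{b}$, which is claim~(1). Taking traces and using $\Tr(\rz)=\Tr(\rho)=\Tr(U\rho\,U\dg)$ together with Lemma~\ref{l:cum} gives $\pbrz=\norm b^2=\Tr(\ketbra{b}{b})$, which is claim~(2). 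Claim~(3) is immediate, since $\norm b^2$ is either $0$ for every input or positive for every input.

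I expect Step~1 to be the crux: one must pin down exactly what ``\T\ computes $U$ along $\b$'' yields (the scalar factor genuinely can vanish on unattainable inputs, and that must be accommodated rather than excluded) and must justify the ``rank-one reduced state forces a product vector'' fact cleanly in the separable, possibly infinite-dimensional, setting. Steps~2 and~3 are elementary; the only delicacy in Step~2 is the parallel case (vectors differing by a scalar), disposed of by routing through a third vector. The same constancy statement could alternatively be obtained from the function-constancy criterion of Appendix~A, which is presumably the route taken in the paper.
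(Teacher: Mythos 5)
Your proof is correct, and it follows the same overall strategy as the paper's --- reduce to pure inputs, show that $C_\b(\ket\psi\ox\ket a)$ is a product vector $U\ket\psi\ox(\text{something})$, prove the second factor is independent of $\psi$, and read off the three claims --- but the two key technical steps are carried out by genuinely different means. Where the paper invokes Ballentine's Pure State Factor Theorem to write $C_\b \rz C_\b\dg = U\rho\,U\dg\ox\xi(\rho)$ with $\xi(\rho)$ a priori mixed, and then runs a separate $\Tr(\sigma^2)=\Tr(\sigma)^2$ computation to show $\xi(\rho)$ is pure, you work directly with the vector $v_\psi=C_\b(\ket\psi\ox\ket a)$ and observe that its reduced state on \H\ has rank at most one, so its Schmidt rank is at most one and it is already a product vector. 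This collapses two steps of the paper's argument into one and is self-contained (no external factor theorem is needed). Your Step 2 is, in substance, the function-constancy criterion of Appendix~A specialized to this situation and proved inline by the same linear-independence mechanism (including the same trick of routing two parallel vectors through a common third); nothing is gained or lost there, though the paper's version is stated for arbitrary vector spaces and so has some independent interest. Your Step 3 is slightly slicker than the paper's: the operator identity $C_\b T=SU$, with $T\psi=\psi\ox a$ and $S\psi=\psi\ox b$, yields claim~1 for arbitrary $\rho$ in one stroke, whereas the paper proves it for pure $\rho$ and extends by linearity. One small point of interpretation: you allow $w_\psi=0$ on unattainable inputs, which is a generous reading of Definition~\ref{d:det2} (the clause ``and thus represents the same state'' arguably forces the scalar factor there to be positive); either reading is compatible with your argument, and accommodating the zero case only makes the proof more robust.
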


\begin{proof}
Claim~3 obviously follows from claim~2. Claim~2 follows from claim~1:
\[ \pbrz =
\frac{\Tr(C_\mu \rz C_\mu\dg)}{\Tr(\rz)}
= \frac{\Tr(U\rho\, U\dg \ox \ketbra{b}{b})}{\Tr(\rz)}
= \frac{\Tr(\rho)\times\Tr(\ketbra{b}{b})}{\Tr(\rz)}
= \Tr(\ketbra{b}{b}). \]
In the rest of the proof we prove Claim~1.

By the linearity of the equation in Claim~1, we may assume without
loss of generality that $\rho$ is pure. Then $U\rho\, U\dg$ is also
pure.
Since \T\ computes $U$ on $\rho$, $\Tr_\A (C_\b \rz C_\b\dg)$ and $U\rho\, U\dg$ agree up to a scalar factor.
By the Pure State Factor Theorem\footnote{%
In \cite{Ballentine} Ballentine works with normalized density operators, but the theorem remains true, because any positive scalar factor can be shifted to $\xi(\rho)$.}
in \cite[\S8.3]{Ballentine},
\[ C_\b \rz C_\b\dg = U\rho\, U\dg \ox \xi(\rho) \]
for some (possibly mixed) state $\xi(\rho)$ in \A,
and therefore $\Tr(C_\b \rz C_\b\dg)
= \Tr(\rho)\times \Tr(\xi(\rho))$. Notice that a mixed state $\sigma$ is pure if and only if $\Tr(\sigma^2) = \Tr(\sigma)^2$. Since $\rho$ is pure, the states \rz\ and $C_\b\rz C_\b\dg$ are pure. We have
\begin{align*}
\left(\Tr(\rho)\right)^2
  \times \left(\Tr(\xi(\rho))\right)^2
&= \big(\Tr(C_\b \rz C_\b\dg)\big)^2\\
&= \Tr\big((C_\b \rz C_\b\dg)^2\big)
= \Tr(\rho^2)
  \times \Tr\big((\xi(\rho))^2\big).
\end{align*}
Cancelling $\Tr(\rho)^2 = \Tr(\rho^2)$, we get $\left(\Tr(\xi(\rho))\right)^2 = \Tr\left(\xi(\rho)^2\right)$, and so $\xi(\rho)$ is pure as well.

Let $\rho = \ketbra\psi\psi$.
Then there is an \A\ vector \ket{f(\psi)} such that
$\xi(\rho) = \ket{f(\psi)}\bra{f(\psi)}$.
Thus, $C_\b \rz C_\b\dg = U\rho\, U\dg \ox \ket{f(\psi)}\bra{f(\psi)}$. Hence
$C_\b (\ket\psi\ox\ket{a}) = U\ket\psi\ox \ket{f(\psi)}$
up to a phase, which we absorb into $\ket{f(\psi)}$.

Since $U\ket\psi\ox \ket{f(\psi)} = C_\b (\ket\psi\ox\ket{a})$ is linear in \ket\psi, we can apply the function constancy criterion Theorem~\ref{t:const} in Appendix~A with $V_1=\H$, $V_2 = \H$, $V_3=\A$, $L = U$, and $f$ is the function $f(\psi)$.
We learn that function $f(\psi)$ is constant, with the same value \ket{b} on all nonzero vectors $\psi$ in $\H$.
So we have
$C_\b\big(\ket\psi\ox\ket{a}\big)
 = U\ket\psi\otimes\ket{b}$
for every vector \ket\psi\ in $\H$.
In terms of density operators,
$C_\b \rz C_\b\dg
= U\rho\, U\dg \ox \ketbra{b}{b}$
for all pure $\rho\in\DO{\H}$ and, by linearity, for all $\rho\in\DO{\H}$.
\end{proof}

\begin{corollary}\label{c:puretoall}
If \T\ computes $U$ on pure inputs along a branch, then it computes $U$ on all inputs along that branch.
\end{corollary}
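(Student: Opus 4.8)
The plan is to read this off Theorem~\ref{t:iit} directly. Assume \T\ computes $U$ on every pure principal input along branch \b. That is precisely the hypothesis of Theorem~\ref{t:iit}, so the theorem supplies a vector $b\in\A$ with
\[ C_\b \rz C_\b\dg = U\rho\, U\dg \ox \ketbra{b}{b} \]
for \emph{all} principal inputs $\rho\in\DO{\H}$, pure or not. The whole point is that claim~1 of Theorem~\ref{t:iit}, once established, already holds for arbitrary $\rho$, so no separate argument for mixed inputs is needed.

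From here the proof is one partial trace. Applying $\Tr_\A$ to both sides and using $\Tr_\A(X\ox Y)=X\cdot\Tr(Y)$ yields
\[ \Tr_\A\big(C_\b \rz C_\b\dg\big) = \Tr(\ketbra{b}{b})\cdot U\rho\, U\dg = \Vert b\Vert^2\, U\rho\, U\dg \]
for every $\rho\in\DO{\H}$. So $\Tr_\A(C_\b\rz C_\b\dg)$ agrees with $U\rho\,U\dg$ up to the single scalar factor $\Vert b\Vert^2$, which is exactly what Definition~\ref{d:det2} asks for, provided that factor is positive.

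The only point that deserves a sentence is the positivity of $\Vert b\Vert^2$. By hypothesis \T\ computes $U$ along \b\ on pure inputs, and in Definition~\ref{d:det2} ``computes'' requires that $\Tr_\A(C_\b\rz C_\b\dg)$ agree with $U\rho\,U\dg$ up to a scalar factor \emph{and thus represent the same state}; since our density operators are nonzero by convention, this forces $\Tr_\A(C_\b\rz C_\b\dg)\ne0$, hence $C_\b\rz C_\b\dg\ne0$, for at least one (indeed every) pure $\rho$, and therefore $b\ne0$. Thus $\Vert b\Vert^2>0$, so $\Vert b\Vert^2\,U\rho\,U\dg$ represents the same state as $U\rho\,U\dg$ for all $\rho\in\DO{\H}$, and \T\ computes $U$ on all inputs along \b. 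There is no genuine obstacle here: all the work lives in Theorem~\ref{t:iit}, and this corollary merely observes that claim~1 of that theorem, partial-traced over \A, is already the definition of computing $U$ on all inputs.
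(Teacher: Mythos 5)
Your proof is correct and matches the paper's (implicit) argument: the paper states this corollary without proof as an immediate consequence of claim~1 of Theorem~\ref{t:iit}, which is exactly how you derive it, via the partial trace $\Tr_\A\big(C_\b\rz C_\b\dg\big)=\Vert b\Vert^2\,U\rho\,U\dg$. Your extra sentence checking $\Vert b\Vert^2>0$ is a reasonable bit of care that the paper leaves tacit.
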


\begin{proof}[Proof of Theorem~\ref{t:ii}]
The reduction theorem, Theorem~\ref{t:reduction}, allows us to transfer the results of Theorem~\ref{t:iit} from meas trees to circuits.
Thus Theorem~\ref{t:ii} follows from the reduction theorem and Theorem~\ref{t:iit}.
\end{proof}

\begin{appendices}

\section{A function constancy criterion}

Let $V_1, V_2, V_3$ be vector spaces, $L: V_1\to V_2$ a linear
transformation of rank $\ge2$, $f$ an arbitrary function from $V_1$ to $V_3$, and $\L = L\ox f$ meaning that $\L x = Lx \ox fx$.

Notice that, if $f$ is constant, then \L\ is linear, because $L$ is linear and \ox\ is bilinear.
The same conclusion follows if $f$ is constant only on those vectors $x$ in $V_1$ for which $Lx\ne0$.
Indeed, changing the value of $fx$ arbitrarily for those $x$ with $Lx=0$ has no effect on \L.
The following theorem provides a useful converse to this observation.

\begin{theorem}\label{t:const}
Suppose that \L\ is linear.
Then $fx = fy$ for all $x,y$ with nonzero $Lx, Ly$.
\end{theorem}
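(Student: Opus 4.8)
The plan is to combine additivity of \L\ with the following elementary fact about tensor products of vector spaces: if $v_1,v_2\in V_2$ are linearly independent, then the map $(a_1,a_2)\mapsto v_1\ox a_1 + v_2\ox a_2$ from $V_3\times V_3$ to $V_2\ox V_3$ is injective. (Write $V_2 = \mathrm{span}(v_1,v_2)\oplus W$; tensoring with $V_3$ preserves this direct sum, and $\mathrm{span}(v_1,v_2)\ox V_3$ is canonically $V_3\oplus V_3$, with $v_i\ox a$ the $i$-th coordinate embedding.) Consequently a relation $v_1\ox a_1 + v_2\ox a_2 = v_1\ox b_1 + v_2\ox b_2$ forces $a_1 = b_1$ and $a_2 = b_2$.

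First I would treat the case in which $Lx$ and $Ly$ are linearly independent. Then $L(x+y) = Lx+Ly$, so by bilinearity of \ox,
\[ \L(x+y) = (Lx+Ly)\ox f(x+y) = Lx\ox f(x+y) + Ly\ox f(x+y). \]
On the other hand, linearity of \L\ gives $\L(x+y) = \L x + \L y = Lx\ox fx + Ly\ox fy$. Equating the two expressions and applying the injectivity fact with $v_1 = Lx$ and $v_2 = Ly$ yields $f(x+y) = fx$ and $f(x+y) = fy$, hence $fx = fy$.

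It remains to handle the case in which $Lx$ and $Ly$ are linearly dependent; since both are nonzero, one is a nonzero scalar multiple of the other. This is precisely where the hypothesis $\mathrm{rank}\,L\ge2$ is used: the image of $L$ is not contained in the span of $Lx$, so there is a vector $z\in V_1$ with $Lz$ linearly independent from $Lx$ (hence $Lz\ne0$), and therefore also linearly independent from $Ly$. Applying the previous case to the pairs $(x,z)$ and $(y,z)$ gives $fx = fz$ and $fy = fz$, so $fx = fy$, completing the argument.

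I do not anticipate a genuine obstacle. The only points demanding a little care are isolating where $\mathrm{rank}\,L\ge2$ is actually needed (the dependent case, to supply the auxiliary vector $z$) and stating the tensor-product injectivity lemma cleanly; I would dispatch the latter in a single sentence rather than dwell on it.
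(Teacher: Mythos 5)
Your proposal is correct and follows essentially the same route as the paper: reduce to the case where $Lx$ and $Ly$ are linearly independent via the rank hypothesis, then expand $\L(x+y)$ two ways and use linear independence in $V_2\ox V_3$ to force $f(x+y)=fx=fy$. The only cosmetic difference is that you package the final step as a tensor-product injectivity lemma, whereas the paper chooses explicit bases and compares coefficients of the $v_i\ox w_j$.
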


\begin{proof}
It suffices to prove $fx=fy$ when $Lx,Ly$ are linearly independent. Indeed, there are vectors $v_1, v_2$ such that $Lv_1, Lv_2$ are independent because the rank of $L$ is $\ge2$. If $Lx,Ly$ are linearly dependent but nonzero, then some $Lv_i$ is independent from both $Lx$ and $Ly$. Then $fx = fv_i = fy$.

So suppose that $Lx, Ly$ are independent, and choose bases
\begin{itemize}
\item $v_1, v_2, \dots$ for $V_2$, where $v_1 = Lx$ and $v_2 = Ly$, and
\item $w_1, w_2, \dots$ for $V_3$, where  $w_1 = fx$ and $fy = aw_1 + bw_2$ for some scalars $a,b$.
\end{itemize}
Then $f(x+y) = \sum_j c_j w_j$ for some scalars $c_j$.
Notice that $V_2$ and $V_3$ are arbitrary vector spaces and  need not have any inner product specified.
Even if they (or some of them) are infinite dimensional, we are talking about bases in the simple sense of linear algebra.
Therefore $\sum_j c_j w_j$ is a finite sum, i.e., all but finitely many of the coefficients $c_j$ are 0.

By linearity of $L$ and \L, we have
\begin{align*}
& (v_1 + v_2) \ox \sum_j c_j w_j = (Lx + Ly) \ox f(x+y)
 = L(x+y) \ox f(x+y) = \\
& \L(x+y) = \L x + \L y
 = (Lx \ox fx) + (Ly \ox fy) =
 v_1\ox w_1 + v_2\ox(a w_1 + bw_2).
\end{align*}
Vectors $v_i\ox w_j$ form a basis of $V_2\ox V_3$.
Comparing the coefficients of basis vectors\\
$v_1\ox w_1, v_1\ox w_2, v_2\ox w_1,  v_2\ox w_2$ at the left end and the right end, we have
\[ c_1 = 1,\quad c_2 = 0,\quad c_1 = a = 1,\quad c_2 = b = 0, \]
and therefore
$ fy = 1w_1 + 0w_2 = w_1 = fx $.
\end{proof}

\section{Isometries}

A linear operator $L$ on a Hilbert space \K\ is an \emph{isometry} if it preserves norms: $\norm{L\ket\psi} = \norm{\ket\psi}$ for every vector \ket\psi\ in \K\ \cite[Def.~2.10]{Moretti}.

Equivalently, $L$ preserves inner products \cite[Prop.~3.8]{Moretti}.
It follows that a (bounded) linear operator $L$ is an isometry if and only if $L\dg L = \Id$:
\[
\braket{L\psi}{L\phi} = \braket\psi\phi\text{ for all } \ket\psi, \ket\phi
\iff \braket{L\dg L\psi}\phi = \braket{\Id\psi}\phi\text{ for all } \ket\psi, \ket\phi
\iff L\dg L = \Id.
\]
Notice that a single-outcome measurement consists of a single linear operator $L$ with $L\dg L = \Id$ and therefore amounts to an isometry.

If $L$ is an isometry then
$\Tr(L\r L\dg) = \Tr(L\dg L\r) = \Tr(\r)$ for all \r\ in DO(\K).
In finite dimensions $L\dg L = \Id$ implies that $L L\dg = \Id$, so that $L\dg = L^{-1}$ and $L$ is unitary.

\begin{theorem}\label{t:iit2}
The input independence theorem, Theorem~\ref{t:ii}, remains valid if ``a unitary operator'' is replaced by ``an isometry''.
\end{theorem}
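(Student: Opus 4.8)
The plan is to check that the entire development behind Theorem~\ref{t:ii} uses unitarity of $U$ only through features that an isometry shares, and that the reduction machinery of \S\ref{s:reduce} never sees $U$ at all. First I would record the two properties of an isometry $L$ on \H\ that are actually needed: by the trace identity for isometries recorded above, $\Tr(L\r L\dg)=\Tr(\r)$ for every $\r\in\DO\H$; and, because $L$ is injective, $L\r L\dg=\ketbra{L\psi}{L\psi}$ is a nonzero (hence pure) density operator whenever $\r=\ketbra\psi\psi$ is, and $L\r L\dg\in\DO\H$ in general. With this noted, Setting~\ref{s2}, Definition~\ref{d:det2}, and the statement of Corollary~\ref{c:det2} make sense verbatim with ``unitary operator'' replaced by ``isometry'', and the proof of Corollary~\ref{c:det2} invokes unitarity only via $\Tr(U\r\,U\dg)=\Tr(\r)$, so it goes through unchanged.

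Next I would re-run the proof of Theorem~\ref{t:iit} with $U$ an isometry on \H\ and locate every use of unitarity. There are precisely three. (i) ``$U\r\,U\dg$ is also pure'': true because $U$ injective sends the rank-one operator $\ketbra\psi\psi$ to the nonzero rank-one operator $\ketbra{U\psi}{U\psi}$. (ii) The trace computations $\Tr(U\r\,U\dg)=\Tr(\r)$ in the purity argument: the isometry trace identity again. (iii) The appeal to the function constancy criterion Theorem~\ref{t:const} with $L=U$, which demands $\mathrm{rank}(U)\ge2$: an isometry on \H\ is injective (so it preserves linear independence) and $\dim\H\ge2$ (Setting~\ref{s1}.E), hence $\mathrm{rank}(U)=\dim\H\ge2$. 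Everything else in that proof — the Pure State Factor Theorem, the purity criterion $\Tr(\sigma^2)=\Tr(\sigma)^2$, the cancellation $\Tr(\r)^2=\Tr(\r^2)$, the linearity bookkeeping, and the absorption of phases into $\ket{f(\psi)}$ — makes no reference to $U$. So the argument yields, word for word, an \A-vector $b$ with $C_\b\rz C_\b\dg=U\r\,U\dg\ox\ketbra bb$ for all $\r\in\DO\H$ and $P_\T(\b\,|\rz)=\norm b^2$ independent of $\r$; that is, Theorem~\ref{t:iit} holds with ``isometry'' in place of ``unitary operator''.

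Finally, the reduction theorem, Theorem~\ref{t:reduction}, builds the meas tree \T\ and the bijection $\mu\mapsto\b_\mu$ with no reference whatsoever to the operator being computed, so it transfers the isometry version of Theorem~\ref{t:iit} from meas trees to circuits exactly as in the proof of Theorem~\ref{t:ii}, giving Theorem~\ref{t:iit2}.

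I expect the only step needing a moment's thought — the ``main obstacle'', such as it is — to be the rank hypothesis in the function constancy criterion, since that is the one place where the original proof exploits a structural feature of $U$ beyond trace preservation; the remark that injectivity of an isometry together with $\dim\H\ge2$ supplies $\mathrm{rank}(U)\ge2$ closes it. The trace identity is a secondary point, already in hand. Crucially, surjectivity of $U$ is never used, so the passage from unitaries to isometries costs nothing.
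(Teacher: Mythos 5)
Your proposal is correct and follows essentially the same route as the paper: walk through the proof of Theorem~\ref{t:iit} isolating each use of unitarity, observe that only the trace identity $\Tr(U\r\,U\dg)=\Tr(\r)$ and preservation of purity are needed (both valid for isometries), and then transfer to circuits via the reduction theorem. Your explicit verification that an isometry has rank $\ge 2$ (needed for Theorem~\ref{t:const}) is a point the paper's proof leaves implicit, so that added care is welcome rather than a divergence.
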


\begin{proof}
By the reduction theorem, Theorem~\ref{t:reduction}, it suffices to prove the version of Theorem~\ref{t:iit} where ``a unitary operator'' is replaced by ``an isometry''.
We walk through the proof of Theorem~\ref{t:iit} and examine all places where the unitarity is or seems to be used.
\begin{itemize}
\item The derivation of Claim~2 from Claim~1 uses only the equality $\Tr(U\r\,U\dg) = \Tr(\r)$ which is true for isometries $U$.

\item The application of Pure State Factor Theorem remains valid because, if \r\ is a pure density operator \ketbra\psi\psi, then  $U\rho U\dg = U\ketbra\psi\psi U\dg = \ketbra{U\psi}{U\psi}$ is pure as well for any linear operator $U$.

\item The proof that $\xi(\r)$ is pure requires only a tiny modification:
\begin{multline*}
\Tr\big((U\r\,U\dg)^2\big) \times \left(\Tr(\xi(\rho))\right)^2
= \big(\Tr(U\r\,U\dg)\big)^2 \times \left(\Tr(\xi(\rho))\right)^2
= \big(\Tr(C_\b \rz C_\b\dg)\big)^2\\
= \Tr\big((C_\b \rz C_\b\dg)^2\big)
= \Tr\big((U\r\,U\dg)^2\big) \times \Tr\big((\xi(\rho))^2\big).
\end{multline*}
Cancelling $\Tr\big((U\r\,U\dg)^2\big)$, we get $\left(\Tr(\xi(\rho))\right)^2 = \Tr\left(\xi(\rho)^2\right)$, and so $\xi(\rho)$ is pure.
\end{itemize}
Thus, only  the isometricity of $U$ is used in the proof of Theorem~\ref{t:iit}.
\end{proof}

The following proposition shows that the isometricity requirement in Theorem~\ref{t:iit2} cannot be substantially weakened at least in the case where the same linear operator is computed along every computation path.

\begin{proposition}
If a quantum circuit computes the same linear operator $U$ on \H\ along every computation path, then $U$ is an isometry up to a positive scalar factor.
\end{proposition}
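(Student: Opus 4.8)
The plan is to reduce to a measurement tree via the Reduction Theorem, Theorem~\ref{t:reduction}, extract from the proof of Theorem~\ref{t:iit} a tensor factorization of the cumulative operator $C_\b$ for \emph{every} branch, and then combine these with the completeness relation $\sum_\b C_\b\dg C_\b = \Id$ to compute $\norm{U\psi}$. Concretely, I would first apply Theorem~\ref{t:reduction} to obtain a meas tree \T\ over $\K = \H\ox\A$ whose branches $\b$ correspond one-to-one to the paths $\mu$ of \C\ with $\Out_\C(\mu|\sigma)=\Out_\T(\b_\mu|\sigma)$; since \C\ computes $U$ along every path, \T\ computes $U$ along every branch. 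Note also that $U$ must be injective: if $U\ket{\psi_0}=0$ for some $\ket{\psi_0}\ne 0$, then $U\ketbra{\psi_0}{\psi_0}U\dg = 0$ does not represent a state, contradicting that \T\ computes $U$ on $\ketbra{\psi_0}{\psi_0}$; and since $\dim\H\ge 2$, injectivity forces $\mathrm{rank}\,U\ge 2$.

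Next, fix a branch $\b$ and replay the proof of Theorem~\ref{t:iit}. As the proof of Theorem~\ref{t:iit2} records, reaching Claim~1 there uses of $U$ only that $U$ maps pure states to pure states — automatic for any linear $U$ — and that $\Tr\big((U\rho U\dg)^2\big)=\norm{U\psi}^4\ne 0$ for pure $\rho=\ketbra\psi\psi$, which holds because $U$ is injective; the function-constancy criterion, Theorem~\ref{t:const}, applies because $\mathrm{rank}\,U\ge 2$. (Claim~2 is not needed, and its derivation is the one place that genuinely uses $\Tr(U\rho U\dg)=\Tr(\rho)$.) This yields, for each branch, a vector $\ket{b_\b}\in\A$ with
\[ C_\b\big(\ket\psi\ox\ket a\big) = U\ket\psi\ox\ket{b_\b}\qquad\text{for every }\ket\psi\in\H. \]

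Finally, apply the first part of Lemma~\ref{l:cum}, $\sum_\b C_\b\dg C_\b = \Id$, to the vector $\ket\psi\ox\ket a$ and take the inner product of both sides with $\ket\psi\ox\ket a$; as \T\ has only finitely many branches this gives $\sum_\b\norm{C_\b(\ket\psi\ox\ket a)}^2 = \norm\psi^2$. Substituting the factorization, the left-hand side equals $\big(\sum_\b\norm{b_\b}^2\big)\norm{U\psi}^2$. Writing $c=\sum_\b\norm{b_\b}^2$, we get $c\,\norm{U\psi}^2 = \norm\psi^2$ for all $\psi\in\H$; since $\H\ne\{0\}$, necessarily $c>0$, so $\norm{\sqrt c\,U\psi}=\norm\psi$ for all $\psi$ and $\sqrt c\,U$ is an isometry. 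Thus $U$ is an isometry up to the positive scalar factor $1/\sqrt c$.

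I expect the only step needing real care to be the second: verifying that the proof of Theorem~\ref{t:iit} still delivers the per-branch factorization when $U$ is merely injective (hence of rank $\ge 2$) rather than isometric. Everything else is routine; the idea that compresses the argument is to sum $\norm{C_\b v}^2$ over all branches via the completeness relation rather than reasoning branch by branch.
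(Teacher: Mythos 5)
Your proof is correct and follows essentially the same route as the paper's: reduce to a meas tree by Theorem~\ref{t:reduction}, extract the per-branch factorization $C_\b(\ket\psi\ox\ket a)=U\ket\psi\ox\ket{b_\b}$ from the argument for Claim~1 of Theorem~\ref{t:iit}, and then sum $\norm{C_\b(\ket\psi\ox\ket a)}^2$ over branches via $\sum_\b C_\b\dg C_\b=\Id$ to get $c\,\norm{U\psi}^2=\norm\psi^2$. You are in fact more careful than the paper at the one delicate point --- the paper simply cites Claim~1 of Theorem~\ref{t:iit2}, whose stated hypothesis is that $U$ is an isometry, whereas you check that the factorization argument needs only that $U$ is injective (hence of rank $\ge2$, so Theorem~\ref{t:const} applies and $\Tr\big((U\rho U\dg)^2\big)\ne0$ can be cancelled), injectivity being forced by the requirement that $U\rho U\dg$ represent a state; your parenthetical that \T\ has finitely many branches is unneeded (and not guaranteed, since measurements may have countably many outcomes), but the sum of nonnegative terms converges anyway by the weak-operator convergence of $\sum_\b C_\b\dg C_\b$.
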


\begin{proof}
By the reduction theorem, Theorem~\ref{t:reduction}, it suffices to prove the meas tree version of the proposition: If a meas tree computes the same linear operator $U$ along every branch \b, then $U$ is an isometry.

By claim~1 of Theorem~\ref{t:iit2}, for every branch \b\ there is a vector $b_\b$ in \A\ such that
$C_\b \rz C_\b\dg = U\rho\, U\dg \ox \ketbra{b_\b}{b_\b}$.
It follows that
\[
 \Tr_\A\sum_\b(C_\b \rz C_\b^\dag)
 = U\r\,U\dg\ox \sum_\b \Tr\ketbra{b_\b}{b_\b}
 = t^2\cdot U\r\,U\dg,
\]
where $t = \sqrt{\sum_\mu \braket{b_\mu}{b_\mu}}$. We show that $tU$ is isometric.

Let \ket\psi\ be an arbitrary vector in \K\ and $\r = \ketbra\psi\psi$.
By Lemma~\ref{l:cum}, $\sum_\b C_\b\dg C_\b = \Id$. We have:
\begin{align*}
\langle tU\psi | tU\psi\rangle
& = \Tr(t^2\cdot U\r\,U\dg)
  =\Tr\Big(\Tr_\A\sum_\b(C_\b \rz C_\b\dg)\Big)\\
& = \sum_\b \Tr(C_\b\dg C_\b \rz)
= \Tr(\rz) = \Tr(\rho)
=\braket\psi\psi. \qedhere
\end{align*}
\end{proof}

\begin{corollary}
Suppose that the Hilbert space \H\ is finite dimensional.
If a quantum circuit computes the same linear operator $U$ on \H\ along every computation path, then $U$ is unitary up to a positive scalar factor.
\end{corollary}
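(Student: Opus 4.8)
The plan is to combine the preceding Proposition with the elementary fact, recorded above in Appendix~B, that an isometry of a finite-dimensional Hilbert space into itself is automatically unitary.

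First I would apply the Proposition: since the circuit computes the same linear operator $U$ on \H\ along every computation path, there is a positive scalar $t$ with $tU$ an isometry. Concretely, $t=\sqrt{\sum_\mu\braket{b_\mu}{b_\mu}}$ in the notation of that proof, and $t>0$ because $\sum_\b C_\b\dg C_\b=\Id\neq 0$ forces some $b_\mu$ to be nonzero; so ``isometry up to a positive scalar factor'' really does give us a genuinely positive $t$ to work with.

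Next I would invoke $\dim\H<\infty$. A linear operator $L$ on a finite-dimensional Hilbert space with $L\dg L=\Id$ is injective, hence surjective (its domain and codomain coincide and have equal finite dimension), hence invertible with $L^{-1}=L\dg$; therefore $LL\dg=\Id$ as well and $L$ is unitary. Taking $L=tU$ shows that $tU$ is unitary, and dividing by the positive scalar $t$ yields that $U=t^{-1}(tU)$ is unitary up to the positive scalar factor $t^{-1}$, which is exactly the claim.

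I expect essentially no obstacle: all the substance has already been packed into the Proposition (and ultimately into Theorem~\ref{t:iit2}), and the remainder is standard linear algebra. The two points worth flagging are that finite-dimensionality is genuinely required here — in infinite dimensions an isometry such as the unilateral shift fails to be unitary, so the hypothesis $\dim\H<\infty$ cannot be dropped — and that $t$ is positive, which is why the conclusion must be phrased ``up to a positive scalar factor'' rather than ``unitary'' outright.
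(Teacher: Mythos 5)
Your proof is correct and takes exactly the route the paper intends: the corollary is stated in the paper without a separate proof, as an immediate consequence of the preceding Proposition combined with the observation (recorded earlier in Appendix~B) that $L\dg L=\Id$ implies $LL\dg=\Id$ in finite dimensions. Your added check that $t>0$ is a sensible small supplement but not a departure from the paper's argument.
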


\end{appendices}

\end{document}